\documentclass{ws-ijqi}
\usepackage{amsmath,amssymb,dsfont,graphicx}

\newcommand{\Tr}{\operatorname{Tr}}

\newcommand{\hilb}[1]{\mathcal{#1}}

\newcommand{\map}[1]{\mathcal{#1}}

\newcommand{\linear}[1]{\boldsymbol{\mathsf{L}}(#1)}
\newcommand{\states}[1]{\boldsymbol{\mathsf{S}}(#1)}

\renewcommand{\ge}{\geqslant}

\def\sH{{\hilb{H}}}
\def\sK{{\hilb{K}}}
\def\>{\rangle}
\def\<{\langle}
\def\mL{{\map{L}}}
\def\mE{{\map{E}}}

\def\openone{\mathds{1}}

\newtheorem{thm}{Proposition}

\begin{document}

\markboth{F. Buscemi, M. Dall'Arno, M. Ozawa, V. Vedral}
{Universal Optimal Quantum Correlator}

\catchline{}{}{}{}{}

\title{Universal Optimal Quantum Correlator}

\author{Francesco Buscemi}

\address{Graduate School of Information Science, Nagoya
  University\\Chikusa-ku, Nagoya, 464-8601, Japan}

\author{Michele Dall'Arno}

\address{Centre for Quantum Technologies, National University of Singapore\\ 3
  Science Drive 2, 117543 Singapore, Republic of Singapore\\Graduate School of
  Information Science, Nagoya University\\Chikusa-ku, Nagoya, 464-8601, Japan}

\author{Masanao Ozawa}

\address{Graduate School of Information Science, Nagoya
  University\\Chikusa-ku, Nagoya, 464-8601, Japan}

\author{Vlatko Vedral}

\address{Atomic and Laser Physics, Clarendon Laboratory, University of
  Oxford\\Parks Road, Oxford OX13PU, United Kingdom\\Centre for Quantum
  Technologies, National University of Singapore\\ 3 Science Drive 2, 117543
  Singapore, Republic of Singapore}

\maketitle

\begin{history}
\received{Day Month Year}
\revised{Day Month Year}
%\accepted{Day Month Year}
%\comby{(xxxxxxxxxx)}
\end{history}

\begin{abstract}
  Recently, a novel operational strategy to access quantum correlation functions
  of the form $\Tr[A \rho B]$ was provided in [F. Buscemi, M. Dall'Arno,
    M. Ozawa, and V. Vedral, arXiv:1312.4240]. Here we propose a realization
  scheme, that we call {\em partial expectation values}, implementing such
  strategy in terms of a unitary interaction with an ancillary system followed
  by the measurement of an observable on the ancilla. Our scheme is {\em
    universal}, being independent of $\rho$, $A$, and $B$, and it is optimal in
  a statistical sense. Our scheme is suitable for implementation with present
  quantum optical technology, and provides a new way to test uncertainty
  relations.
\end{abstract}

\keywords{quantum correlation functions; ideal quantum correlator; partial
  expectation values}

\section{Introduction}

Stochastic processes play a fundamental role in a plethora of fields such as
classical and quantum statistics~\cite{quant-stat},
thermodynamics~\cite{quant-term}, and field theory~\cite{quant-field}. They are
successfully described in terms of correlation functions, namely expectation
values of the product of dynamical variables. In classical theory, dynamical
variables are represented by real functions, while in quantum theory they are
represented by quantum observables -- i.e. Hermitian operators. Both theories
provide a prescription to directly measure the expectation value of any single
dynamical variable.

Classically, this prescription is sufficient to measure any correlation
function, since products of dynamical variables are dynamical variables
themselves. This is not the case in quantum theory, where the product of
non-commuting observables is not an observable in general. Thus, while formally
well-defined, quantum correlation functions appear to lack of a direct
operational interpretation.

Recently~\cite{BDOV13}, the present authors proposed a novel scheme -- referred
to as {\em ideal quantum correlator} -- which allows to operationally access the
expectation value of the product of any two observables $A$ and $B$ over any
quantum state $\rho$, namely any two-point quantum correlation function $\Tr[ A
  \rho B ]$. The scheme consists in a quantum preprocessing and classical
postprocessing strategy which is {\em universal}, being independent of $\rho, A,
B$, and is {\em optimal} in a statistical sense.

The aim of this work is to provide a simple realization scheme for our universal
optimal strategy, in terms of a unitary interaction $U$ with an ancillary system
followed by the measurement of an observable $Z$ on the ancilla. Our scheme is
universal, $U$ and $Z$ being fixed and independent of $\rho$, $A$, and $B$, and
optimal, minimizing the statistical error associated with the classical
postprocessing. Our scheme is suitable for implementation with present quantum
optical technology, and provides a new way to test uncertainty
relations~\cite{uncertainty}.

\section{Formalization}

Let us first fix the notation~\cite{wilde}. Let $\sH$ and $\sK$ be some Hilbert
spaces. We denote by $\linear{\sH,\sK}$ the set of all linear operators mapping
elements in $\sH$ to elements in $\sK$, with the convention that $\linear{\sH}
:= \linear{\sH,\sH}$. We denote by $\states{\sH}$ the set of all states, namely
all those operators $\rho \in \linear{\sH}$ such that $\rho \ge 0$ and
$\Tr[\rho]=1$. The identity matrix is denoted by the symbol $\openone$. Physical
transformation mapping quantum states on $\sH$ to quantum states on $\sK$ are
described by trace-preserving, completely positive linear maps $\map{M}:
\linear{\sH} \to \linear{\sK}$.

\section{Ideal Quantum Correlator}

Formally, for any Hilbert space $\hilb{H}$, the {\em ideal quantum correlator}
is defined~\cite{BDOV13} as the map $\map{T}: \linear{\hilb{H}} \to
\linear{\hilb{H}} \otimes \linear{\hilb{H}}$ such that
\begin{align}\label{eq:correlator}
  \Tr[\map{T}(\rho) \ (A \otimes B)] = \Tr[ A \rho B ],
\end{align}
for any observables $A, B \in \linear{\hilb{H}}$ and any state $\rho \in
\states{\hilb{H}}$.

The ideal quantum correlator $\map{T}$ is not a physical map, as it is not
Hermiticity preserving (HP). However, in Ref.~\cite{BDOV13} it was proved that
its expectation value can be decomposed in terms of physical -- namely,
completely positive (CP) and trace not-increasing -- maps. More precisely, it
was shown that any HP linear map $\mL: \linear{\sH}\to\linear{\sK}$ can be
decomposed as $\mL=\sum_i\lambda_i\mE_i$, where $\lambda_i$ are real
coefficients and $\mE_i:\linear{\sH}\to\linear{\sK}$ are completely positive
linear maps whose average, $\mE:=\sum_i\mE_i$, is trace-preserving. These are
called ``statistical decompositions'' of map $\mL$ and are illustrated in
Fig.~\ref{fig:stat-sim}. In Ref.~\cite{BDOV13}, the optimal decomposition of map
$\map{T}$ minimizing the statistical error associated with the postprocessing
was derived. In the next Section we will provide a realization scheme for such
decomposition.

\begin{figure}[htb]
  \begin{center}
    \includegraphics[width=.75\columnwidth]{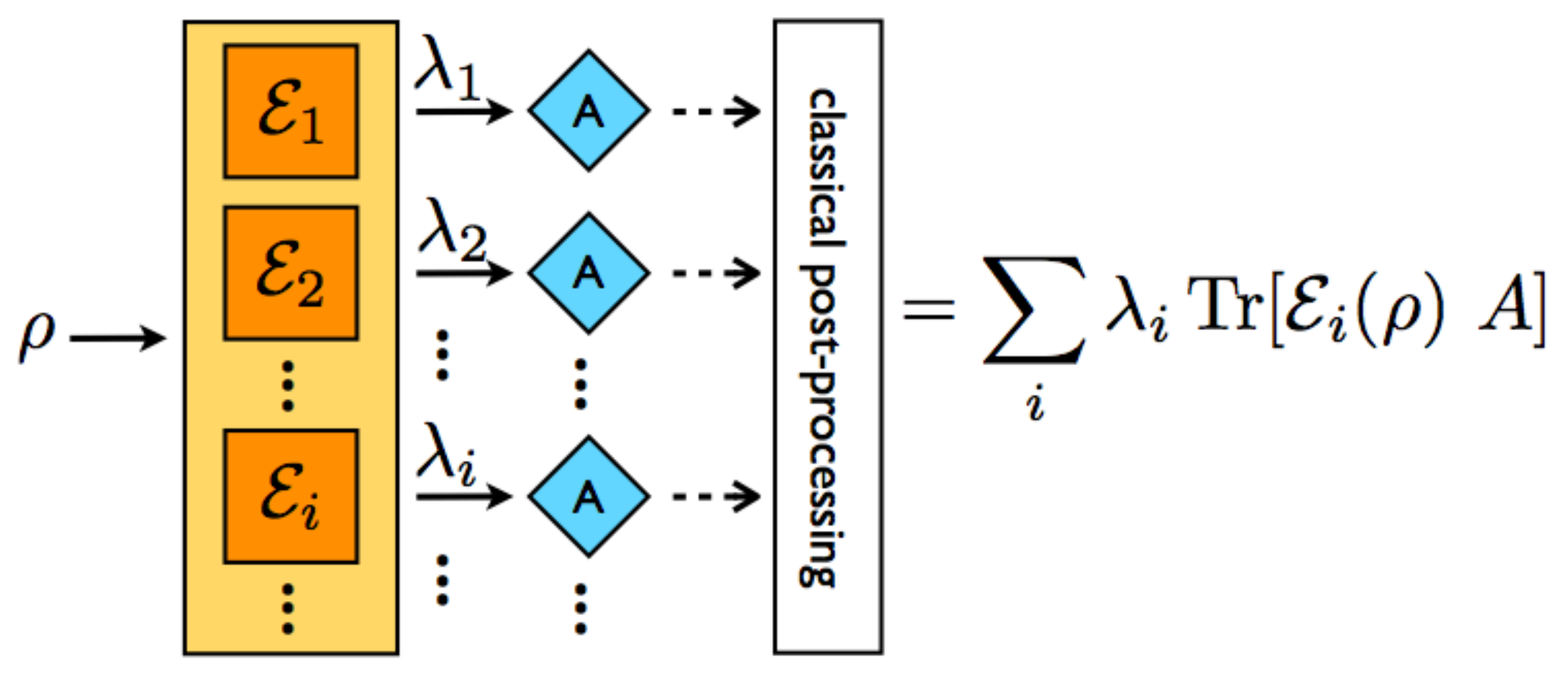}
  \end{center}
  \caption{Statistical decomposition of a non-physical transformation: (1)~the
    initial state $\rho$ goes through a quantum instrument, described by a
    collection of CP maps $\{\map{E}_i\}_i$; (2)~the outcome $i$, occurring with
    probability $p(i)=\Tr[{\map{E}}_i(\rho)]$, is recorded; (3)~the
    corresponding output state $\rho_i={\map{E}}_i(\rho)/p(i)$ is used to
    evaluate the expectation value $\langle A\rangle_i=\Tr[\rho_i\ A]$; (4)~all
    data are finally recombined as $\sum_i\lambda_ip(i)\langle
    A\rangle_i=\sum_i\lambda_i\Tr[\map{E}_i(\rho)\ A]$, for suitable real
    coefficients $\lambda_i$.}
 \label{fig:stat-sim}
\end{figure}

\section{Partial Expectation Values}

The following representation theorem provides a realization scheme to access
quantum correlation functions as in Eq.~\eqref{eq:correlator}, in terms of
\emph{partial expectation values}.

\begin{thm}[Partial Expectation Values]\label{prop:partexpvalues}
  For any linear HP map $\map{L}: \linear{\hilb{H}}\to\linear{\hilb{K}}$, there
  exists a finite dimensional ancillary quantum system $\sK'$, an isometry
  $V:\sH\to\sK\otimes\sK'$ and an observable $Z\in\linear{\sK'}$, such that
  \begin{align}
    \Tr[V\rho V^\dag \ (A\otimes Z)] = \Tr[\mL(\rho)\ A],
  \end{align}
  for all states $\rho\in\states{\sH}$ and all observables
  $A\in\linear{\sK}$. Equivalently,
  \begin{align}
    \mL(\rho)=\Tr_{\sK'}[V\rho V^\dag \ (\openone\otimes Z)],
  \end{align}
  namely, the action of $\mL$ can be written as a partial expectation value.
\end{thm}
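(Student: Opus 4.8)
The plan is to reduce the statement to the statistical decomposition recalled above and then to dilate the resulting quantum instrument. First I would invoke the result of Ref.~\cite{BDOV13}: write $\mL=\sum_i\lambda_i\mE_i$ with real coefficients $\lambda_i$, each $\mE_i:\linear{\sH}\to\linear{\sK}$ completely positive, and the average $\mE:=\sum_i\mE_i$ trace-preserving. Since $\sH$ and $\sK$ are finite dimensional the sum has finitely many terms and each $\mE_i$ admits finitely many Kraus operators $\{K_{ij}\}_j$, so that $\mE_i(\rho)=\sum_j K_{ij}\rho K_{ij}^\dag$; the trace-preservation of $\mE$ then reads $\sum_{ij}K_{ij}^\dag K_{ij}=\openone$.

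The idea is that the family $\{\mE_i\}_i$ is exactly a quantum instrument, and the whole construction amounts to its Stinespring dilation together with an observable whose eigenvalues store the coefficients $\lambda_i$. Concretely I would let $\sK'$ be the finite-dimensional space with orthonormal basis $\{\ket{i,j}\}_{ij}$ and set $V:=\sum_{ij}K_{ij}\otimes\ket{i,j}$, viewed as a map $\sH\to\sK\otimes\sK'$. The trace-preservation identity immediately gives $V^\dag V=\sum_{ij}K_{ij}^\dag K_{ij}=\openone$, so $V$ is an isometry. This is the single place where the trace-preserving average is indispensable, and I expect it to be the conceptual crux: without it one would obtain only a contraction, and the scheme would fail to correspond to a genuine unitary-plus-measurement realization. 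Introducing the mutually orthogonal projectors $P_i:=\sum_j\ketbra{i,j}{i,j}$ onto the sectors labelled by $i$, a short computation yields the sector identity $\mE_i(\rho)=\Tr_{\sK'}[V\rho V^\dag\,(\openone\otimes P_i)]$.

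Finally I would define the observable $Z:=\sum_i\lambda_i P_i\in\linear{\sK'}$, which is Hermitian because the $\lambda_i$ are real and the $P_i$ are mutually orthogonal projectors. Summing the sector identities against $\lambda_i$ gives
\begin{align}
  \Tr_{\sK'}[V\rho V^\dag\,(\openone\otimes Z)]=\sum_i\lambda_i\,\mE_i(\rho)=\mL(\rho),
\end{align}
which is the second displayed equation. The first then follows by a one-line partial-trace manipulation: writing $(A\otimes Z)=(A\otimes\openone)(\openone\otimes Z)$ and using $\Tr_{\sK'}[M\,(A\otimes\openone)]=\Tr_{\sK'}[M]\,A$, one obtains $\Tr[V\rho V^\dag\,(A\otimes Z)]=\Tr[\mL(\rho)\,A]$ for every $A\in\linear{\sK}$ and every $\rho\in\states{\sH}$. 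The only remaining routine checks are the finite dimensionality of $\sK'$, which is immediate since the instrument carries finitely many Kraus operators, and the explicit verification of the sector identity, which is a direct expansion relying on $\braket{i,l}{m,j}=\delta_{im}\delta_{lj}$.
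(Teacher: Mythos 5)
Your proposal is correct and takes essentially the same route as the paper: a statistical decomposition $\mL=\sum_i\lambda_i\mE_i$ with trace-preserving average, a Stinespring--Kraus dilation of the resulting instrument $\{\mE_i\}_i$ into an isometry $V$ plus a measurement on the ancilla, and the definition $Z:=\sum_i\lambda_iP_i$. The only difference is that you build the dilation explicitly from Kraus operators and verify the isometry and sector identities by hand, whereas the paper invokes the Stinespring--Kraus representation theorem as a black box.
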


\begin{proof}
  Let $\mL(\rho)=\sum_i\lambda_i\mE_i(\rho)$ be a statistical decomposition of
  $\mL$. Then, following Stinespring-Kraus's representation
  theorem~\cite{stinespring}, there exist $\sK'$ ancillary Hilbert space,
  $V:\sH\to\sK\otimes\sK'$ isometry, and $\{P^i\}_i$ POVM on $\sK'$ such that
  \begin{align*}       
    \mE_i(\rho)=\Tr_{\sK'}[V\rho V^\dag \ (\openone_{\sK}\otimes P^i_{\sK'})].
  \end{align*}
  The statement is recovered by setting $Z:=\sum_i\lambda_iP^i$.
\end{proof}

Notice that Proposition~\ref{prop:partexpvalues} can be regarded as a
generalization of Stinespring-Kraus's representation theorem~\cite{stinespring}
to arbitrary HP map. The idea of partial expectation values is depicted in
Fig.~\ref{fig:real} below.

\begin{figure}[htb]
  \begin{center}
    \includegraphics[width=.75\columnwidth]{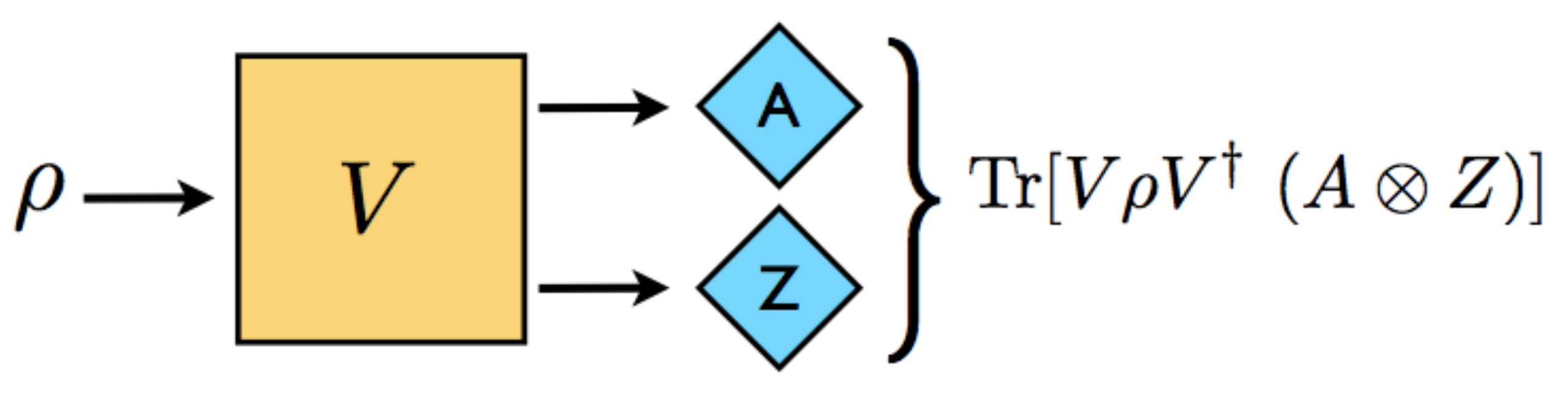}
  \end{center}
  \caption{Universal optimal strategy to access quantum correlation functions in
    terms of \emph{partial expectation values}, see
    Proposition~\ref{prop:partexpvalues}. The isometry $V$ and the ancillary
    observable $Z$ do not depend neither on the input state $\rho$ nor on the
    final observable $A$, but only on the linear HP map $\mL$. It holds that
    $\Tr[V\rho V^\dag\ (A\otimes Z)]=\Tr[\mL(\rho)\ A]$, for all input states
    $\rho$ and all final observables $A$.}
  \label{fig:real}
\end{figure}

\section{Conclusion}

We provided a simple realization scheme for accessing quantum correlations, in
terms of a unitary interaction $U$ with an ancillary system followed by the
measurement of an observable $Z$ on the ancilla. Our scheme is universal, $U$
and $Z$ being independent of $\rho$, $A$, and $B$, and optimal, minimizing the
statistical error associated with the classical postprocessing. Our scheme is
suitable for implementation with present quantum optical technology, and
represents a new way to test uncertainty relations~\cite{uncertainty}.

\section*{Acknowledgment}

This work was supported by the Ministry of Education (Singapore), the Ministry
of Manpower (Singapore), the National Research Foundation (Singapore), the EPSRC
(UK), the Templeton Foundation, the Leverhulme Trust, the Oxford Martin School,
the Oxford Fell Fund and the European Union, the JSPS (Japan society for the
Promotion of Science) Grant-in-Aid for JSPS Fellows No. 24-0219, and the JSPS
KAKENHI No. 26247016.

\end{document}